\newtheorem{theorem}{Theorem}
\newtheorem{conjecture}{Conjecture}
\newtheorem{lemma}[theorem]{Lemma}
\newtheorem{corollary}[theorem]{Corollary}
\theoremstyle{definition}
\theoremstyle{remark}
\DeclareMathOperator{\lcm}{lcm}
\title{The lonely runner conjecture holds for nine runners}
\author{Matthieu Rosenfeld\thanks{
 This research was funded, in whole or in part, by the French National Research Agency (ANR) under grant agreement No. ANR-24-CE48-3758-01. In accordance with the objective of open access dissemination, the author applies a Creative Commons Attribution (CC-BY) license to any accepted article or manuscript (AAM) resulting from this submission.
}\\
LIRMM, Univ Montpellier, CNRS, Montpellier, France
}
\date{\today}
\begin{document}

\maketitle
\begin{abstract}
We prove that the lonely runner conjecture holds for nine runners. Our proof is based on a couple of improvements of the method we used to prove the conjecture for eight runners.
\end{abstract}

\section{Introduction}
The lonely runner conjecture, first stated by Wills in 1967 \cite{phdWills1965,Wills1967Jun}, is a well-known open problem in combinatorial number theory with several applications and equivalent formulations \cite{Perarnau2025Nov}. The conjecture can be stated as follows.
\begin{conjecture}[Lonely Runner Conjecture]
For all integer $k \geq 1$, and every set of distinct integers $v_1, \ldots, v_{k}$, there exists a real number $t$ such that for all $i$, we have 
\[
  \|tv_i\| \geq \frac{1}{k+1}.
\]
\end{conjecture}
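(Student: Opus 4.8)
The plan is to turn the conjecture, for each fixed $k$, into a finite verification, and then to make that verification feasible; the statement for all $k$ then follows case by case, and I will flag why $k \ge 9$ is the real obstacle.

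\emph{Normalisation.} Fix an index $i$ and set $x_j = |v_j - v_i|$ for $j \ne i$; since $\|tv_i - tv_j\| = \|t x_j\|$ and dividing out the common factor of the $x_j$ does not change the problem, it suffices to show: for any $k$ pairwise distinct positive integers $x_1 < \dots < x_k$ with $\gcd(x_1,\dots,x_k) = 1$ there is a real $t$ with $\|t x_j\| \ge 1/N$ for all $j$, where $N = k+1$ (if some of the $|v_j - v_i|$ coincide the constraints only loosen, so this is the hard case). The function $g(t) = \min_j \|t x_j\|$ is $1$-periodic, continuous, and piecewise linear with breakpoints at explicit rationals, so its maximum over a period is attained at a rational $t = a/q$ with denominator $q$ bounded explicitly in terms of $x_1, \dots, x_k$. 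Hence, for a given tuple, it suffices to test a finite, explicitly bounded set of candidate times.

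\emph{Bounding the speeds.} Each "bad" set $\{t \in [0,1) : \|t x_j\| < 1/N\}$ is a union of $x_j$ intervals of total length $2/N$, so a naive union bound is useless. But once one speed — say the largest, $x_k$ — exceeds an explicit bound $B(k)$ depending only on $k$, one can first choose $t$ so that the $k-1$ runners $x_1, \dots, x_{k-1}$ are simultaneously lonely on a whole open subinterval $I$, and then, since $\{t : \|t x_k\| \ge 1/N\}$ meets every interval of length at least $1/x_k$, slide $t$ into $I$ to get a witness. Thus, for each $k$, the conjecture reduces to checking the finitely many tuples with $x_1 < \dots < x_k \le B(k)$. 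This is the engine of the eight-runner proof, and the two improvements announced in the abstract are a sharper bound $B(k)$ and a faster per-tuple test.

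\emph{Carrying out the verification.} The number of tuples below $B(k)$ is astronomically large, so one does not enumerate them directly. Instead, build candidate "bad" tuples one speed at a time and prune aggressively: as soon as a partial tuple already admits a common loneliness time on an open interval, every extension of it inherits that witness, so the branch is cut; each such test is a small feasibility computation over the precomputed candidate times. With the improved bound and the faster test, this search terminates for $k = 8$ (nine runners); together with the previously established cases $k \le 7$, the conjecture holds for all $k \le 8$.

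\emph{The main obstacle: $k \ge 9$.} The reduction and the search are uniform in $k$, so in principle they decide every case; what fails is purely quantitative. The bound $B(k)$ coming from the averaging argument grows super-exponentially, and the pruned search remains exponential in the number of speeds, so the computation blows up well before ten runners. Pushing further would require either a genuine induction — reducing $\mathrm{LRC}(k)$ to $\mathrm{LRC}(k-1)$ together with a bounded finite check, for instance by showing that an extremal $k$-tuple must contain an extremal $(k-1)$-subtuple plus a controlled extra speed — or an asymptotic-to-exact transfer that empties the residual finite part for large $k$. Absent such a structural input, the method described here certifies the conjecture exactly through $k = 8$, which is the content established in this paper.
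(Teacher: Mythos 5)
The statement you are proving is the full Lonely Runner Conjecture, for \emph{every} $k\ge 1$. This is an open problem: the paper does not prove it (it states it as a conjecture and establishes only the case $k=8$, i.e.\ nine runners), and your proposal concedes in its final paragraph that your method ``certifies the conjecture exactly through $k=8$.'' So even on its own terms the proposal does not prove the statement; all cases $k\ge 9$ are left untouched, and no induction, structural reduction, or asymptotic argument is supplied to cover them. A case-by-case finite verification that stops at a fixed $k$ cannot close a statement quantified over all $k$; this is the central, unrepairable gap.

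Separately, the route you sketch for the finitely many cases is not sound as written and is not the paper's route. In your ``bounding the speeds'' step, a loneliness time $t$ for $x_1,\dots,x_{k-1}$ with margin $1/k$ survives with margin $1/(k+1)$ only on an interval whose length is governed by the \emph{other} speeds (roughly $\bigl(1/k-1/(k+1)\bigr)/x_{k-1}$), not by $k$ alone; hence ``once $x_k>B(k)$ for some $B(k)$ depending only on $k$, slide $t$'' does not follow. Making such a reduction rigorous is essentially Tao's result, which gives bounds of shape $k^{O(k^2)}$, and enumerating all tuples below such a bound -- even with the pruning you describe -- is computationally hopeless for $k=8$; your claim that the search ``terminates'' is unsupported. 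The paper's actual engine is different: by the Malikiosis--Santos--Schymura criterion (Theorem~\ref{UpperBoundOnSum} and Corollary~\ref{cor:UpperBoundOnProduct}), a minimal counterexample with $\gcd=1$ has product of speeds at most $\left[\binom{9}{2}^{7}/8\right]^{8}<9\times 10^{79}$; then Lemmas~\ref{EvenStrongerGCDCondition} and~\ref{lem:needOfCovering}, verified by a finite covering computation modulo $9cd$, force $42$ pairwise coprime prime powers to divide that product, and their least common multiple exceeds the bound -- a contradiction. None of this quantitative machinery (the gcd-sum bound, the divisibility lemmas, the covering verification) appears in your sketch, so even the nine-runner case is not established by your argument.
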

This conjecture is still wide open in general, but has been established for small values of $k$:
\begin{itemize}
\item $k=3$ the first non-trivial value \cite{Betke1972Jun,Cusick1973, Cusick1974Jan, Cusick1982Jan},
\item $k=4$ first resolved with computer aid~\cite{Cusick1984Oct} and later simplified~\cite{Bienia1998Jan},
\item $k=5$~\cite{Bohman2001Feb,Renault2004Oct},
\item $k=6$~\cite{Barajas2008Mar},
\item $k=7$ was recently solved with computer assistance by the author of the current article~\cite{Rosenfeld2025Sep}.
\end{itemize}
For a more detailed introduction to the lonely runner conjecture, we refer the reader to our previous article~\cite{Rosenfeld2025Sep} or the recent survey by Perarnau and Serra~\cite{Perarnau2025Nov}.
In this article, we adapt our technique from \cite{Rosenfeld2025Sep} to prove the case of nine runners. Note that a couple of days before the submission of this article as a preprint, Trakulthongchai independently announced a proof for the cases of nine and ten runners \cite{Tanupat2025Nov}. The result of \cite{Tanupat2025Nov} also build on the proof from \cite{Rosenfeld2025Sep}.

The article is organized as follows. Section 2 recalls the core argument from \cite{Rosenfeld2025Sep}. In Section 3, we present our main lemma and use it to prove the lonely runner conjecture for nine runners. Section 4 details the computer-assisted verification. The concluding section outlines possible improvements, discusses an attempted SAT-solver approach, and compares our results with those of \cite{Tanupat2025Nov}.

In the remainder, we say that a set $S\subseteq\mathbb{N}$ of size $k$ has the \emph{lonely runner property (abbreviated LR property)} if there exists $t$ such that for all $v\in S$, $\|tv\| \geq \frac{1}{k+1}.$

\section{The core of the argument from \cite{Rosenfeld2025Sep}}
The core of our argument relies on the following bound by Malikiosis, Santos and Schymura \cite{Malikiosis2024Nov}.
\begin{theorem}[{\cite[Theorem A]{Malikiosis2024Nov}}]\label{UpperBoundOnSum}
If the lonely runner conjecture holds for $k$, then it holds for $k+1$ for all $k$-uple $v_1, \ldots, v_k$ such that $\gcd(v_1, \ldots, v_k)=1$, and
\[
\sum_{S\subseteq \{1,\ldots, k\}} \gcd(\{v_i: i\in S\}) >\binom{k+1}{2}^{k-1}.
\]
\end{theorem}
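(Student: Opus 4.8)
The plan is to recast the problem for $\mathbf v=(v_1,\dots,v_k)$ as a covering question in the $(k-1)$-dimensional space transverse to $\mathbf v$, to recognise $\sum_{S}\gcd(\{v_i:i\in S\})$ as the number of lattice points of the zonotope that governs that covering, and then to use the conjecture for $k$ runners to turn a near-covering into the obstruction we need.

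\smallskip
\noindent\emph{Step 1: geometric reformulation.}
After fixing $i$ and subtracting the $i$-th speed, the goal is a time $t$ with $\|tv_j\|\ge \tfrac1{k+1}$ for every $j$. Since $\gcd(v_1,\dots,v_k)=1$ the vector $\mathbf v$ is primitive, so the orbit $O=\{t\mathbf v\bmod\Z^k:t\in\R\}$ is an embedded circle in $\R^k/\Z^k$, and such a $t$ exists if and only if $O$ meets the cube $\big[\tfrac1{k+1},\tfrac k{k+1}\big]^k$, i.e.\ if and only if the $k$ slabs $\{x:\|x_j\|<\tfrac1{k+1}\}$ fail to cover $O$. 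Through the covering-radius dictionary for this view-obstruction problem this becomes a lower bound, of the shape ``$\operatorname{covrad}_{\bar\Lambda}(\bar Z)\ge\binom{k+1}{2}$'', for the lattice $\bar\Lambda=\Z^k/\Z\mathbf v$ in $\mathbf v^{\perp}\cong\R^{k-1}$ and the lattice zonotope $\bar Z=\sum_{j=1}^k[0,\bar{\mathbf e}_j]$ obtained by projecting the unit cube along $\R\mathbf v$. Note that the dimension $k-1$ and the scale $\binom{k+1}{2}$ here are exactly the exponent and the base appearing in the hypothesis.

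\smallskip
\noindent\emph{Step 2: the arithmetic identity.}
The heart of the matter is
\[
\sum_{S\subseteq\{1,\dots,k\}}\gcd(\{v_i:i\in S\})=\#\big(\bar Z\cap\bar\Lambda\big).
\]
I would prove this from Stanley's lattice-point formula for zonotopes, $\#(\bar Z\cap\bar\Lambda)=\sum_{S}h(S)$, where the sum is over linearly independent subsets of the generators and $h(S)$ is the gcd of the maximal minors of $[\bar{\mathbf e}_i:i\in S]$. The unique relation $\sum_j v_j\bar{\mathbf e}_j=0$ makes every proper subset of $\{1,\dots,k\}$ independent, and an index computation in $\Z^k/\Z\mathbf v$ identifies $h(S)$ with $\gcd(\{v_\ell:\ell\notin S\})$: for $S=\{1,\dots,k\}\setminus\{j\}$ it is $[\bar\Lambda:\langle\bar{\mathbf e}_i:i\ne j\rangle]=|v_j|$, and the general case follows by elementary divisors. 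Reindexing by $T=\{1,\dots,k\}\setminus S$ (with $\gcd(\emptyset)=0$) gives the identity; equivalently it can be read off from $\sum_S\gcd(\{v_i:i\in S\})=\sum_{d\ge1}\varphi(d)\,2^{\#\{i:\,d\mid v_i\}}$.

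\smallskip
\noindent\emph{Step 3: closing with \textup{LRC}$(k)$, and the main obstacle.}
A bare Minkowski volume bound on $\operatorname{covrad}_{\bar\Lambda}(\bar Z)$ is far too weak, so the conjecture for $k$ runners has to be fed in: deleting the runner $v_j$ leaves a $k$-runner instance, and $\mathrm{LRC}(k)$ supplies a point of the transverse picture lying at distance $\ge\tfrac1k$, rather than only $\tfrac1{k+1}$, from the walls attached to $\{v_i:i\ne j\}$. The surplus $\tfrac1k-\tfrac1{k+1}=\tfrac1{k(k+1)}$, available for each of the $k$ choices of deleted runner and accumulated through the $k-1$ transverse directions, is then traded against the $\#(\bar Z\cap\bar\Lambda)$ lattice points of $\bar Z$: once this count exceeds $\binom{k+1}{2}^{k-1}$ the zonotope is so lattice-rich that its $\bar\Lambda$-translates cannot cover even a single cell of side $\binom{k+1}{2}$ at the required dilation, the $k$ slabs therefore fail to cover $O$, and the desired $t$ exists. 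The delicate point — and the step I expect to be the main obstacle — is exactly this trade-off: turning the slogan ``many coincident resonances force the bad intervals to overlap too much to cover'' into the precise threshold $\binom{k+1}{2}^{k-1}$ needs a covering estimate for lattice zonotopes that genuinely beats the volume inequality, and it is here, rather than in the purely geometric input, that $\mathrm{LRC}(k)$ must be invoked and the constants made to match.
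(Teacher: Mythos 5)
First, note that the paper you are reading does not prove this statement at all: it is Theorem~A of Malikiosis--Santos--Schymura, imported as a black box, so there is no in-paper proof to compare against. Judged on its own terms, your proposal sets up the right framework but does not constitute a proof. Steps~1 and~2 are essentially sound and do match the strategy of the cited source: the reformulation of the lonely runner condition for $\mathbf v$ as a covering statement for the zonotope $\bar Z=\pi([0,1]^k)$ with respect to $\bar\Lambda=\pi(\Z^k)$ in $\mathbf v^{\perp}\cong\R^{k-1}$, and the identification of $\sum_{S}\gcd(\{v_i:i\in S\})$ with $\#(\bar Z\cap\bar\Lambda)$ via Stanley's formula for lattice points of zonotopes (with $h(S)=\gcd(\{v_\ell:\ell\notin S\})$ by the index computation you indicate). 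Two small caveats there: the covering-radius threshold is normalization-dependent and your claim that it equals $\binom{k+1}{2}$ is asserted by numerology rather than derived; and your aside $\sum_S\gcd(\{v_i:i\in S\})=\sum_{d\ge1}\varphi(d)\,2^{\#\{i:\,d\mid v_i\}}$ diverges as written (the terms with $d$ dividing no $v_i$ each contribute $\varphi(d)$); the correct finite form is $\sum_{d\ge1}\varphi(d)\bigl(2^{\#\{i:\,d\mid v_i\}}-1\bigr)$.

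The genuine gap is Step~3, and you have correctly diagnosed it yourself. Everything quantitative in the theorem --- the exponent $k-1$, the base $\binom{k+1}{2}$, and the precise way $\mathrm{LRC}(k)$ enters --- lives in the step you describe only as a slogan (``the zonotope is so lattice-rich that its translates cannot cover\dots''). What is needed, and what the source actually supplies, is a concrete mechanism of roughly the following shape: a recursive covering inequality relating $\mu(\bar Z,\bar\Lambda)$ to the covering radius of the zonotope obtained by deleting a generator $\bar e_j$ (which, via $\mathrm{LRC}(k)$ applied to the sub-tuple $\{v_i:i\ne j\}$, is already controlled), with an error term governed by a lattice width in the direction $\bar e_j$; together with the elementary bound that if all $k$ of these widths were below $\binom{k+1}{2}$ then $\#(\bar Z\cap\bar\Lambda)$ could not exceed $\binom{k+1}{2}^{k-1}$. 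Without an explicit inequality of this kind --- one that, as you say, ``genuinely beats the volume inequality'' --- the hypothesis on the gcd-sum is never actually used, and the argument does not close. As it stands the proposal is a correct reduction plus an accurate description of what remains to be proved, not a proof.
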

In \cite{Rosenfeld2025Sep}, we used the following corollary to bound the product of the speeds in any minimal counter-example for $k$ runners.
\begin{corollary}[{\cite{Rosenfeld2025Sep}}]\label{cor:UpperBoundOnProduct}
If the lonely runner conjecture holds for $k$, then it holds for $k+1$ for all $k$-uple $v_1, \ldots, v_k$ such that $\gcd(v_1, \ldots, v_k)=1$, and
\[
\prod_{i=1}^{k}v_i\ge  \left[\frac{\binom{k+1}{2}^{k-1}}{k}\right]^k\,.
\]
\end{corollary}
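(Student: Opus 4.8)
The plan is to deduce Corollary~\ref{cor:UpperBoundOnProduct} from Theorem~\ref{UpperBoundOnSum} by bounding the subset-sum of gcd's from below using \emph{only} the singleton subsets, and then converting the resulting sum $\sum_i v_i$ into the product $\prod_i v_i$ through the inequality of arithmetic and geometric means. The hypothesis $\gcd(v_1,\dots,v_k)=1$ is simply carried over from Theorem~\ref{UpperBoundOnSum}, so nothing has to be done about it.

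First I would reduce to the case $v_i\ge 1$ for all $i$: the hypothesis forces $\prod_i v_i>0$, hence every $v_i$ is nonzero, and replacing each $v_i$ by $|v_i|$ changes neither the subset-sum of gcd's, nor the product, nor (since $\|x\|=\|-x\|$) the relevant instance of the conjecture. Next I would use that the gcd of any nonempty family of positive integers is a positive integer, so that keeping the singleton terms and applying the crude bound $\gcd\ge 1$ on every subset of size at least $2$ gives
\[
\sum_{S\subseteq\{1,\dots,k\}}\gcd(\{v_i:i\in S\})\ \ge\ \sum_{i=1}^{k}v_i\;+\;\bigl(2^{k}-k-1\bigr).
\]

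Then I would apply AM--GM to $v_1,\dots,v_k$ to get $\sum_{i=1}^{k}v_i\ge k\bigl(\prod_{i=1}^{k}v_i\bigr)^{1/k}$, and plug in the hypothesis $\prod_{i=1}^{k}v_i\ge\bigl[\binom{k+1}{2}^{k-1}/k\bigr]^{k}$ to obtain $\sum_{i=1}^{k}v_i\ge\binom{k+1}{2}^{k-1}$. Combining the two displays and using $2^{k}-k-1\ge 1$ for $k\ge 2$ yields the \emph{strict} inequality $\sum_{S}\gcd(\{v_i:i\in S\})>\binom{k+1}{2}^{k-1}$, which is exactly what Theorem~\ref{UpperBoundOnSum} needs, so the corollary follows.

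I do not expect any genuine obstacle here: the whole argument is one line of AM--GM once the subset-sum has been replaced by $\sum_i v_i$. The only point that requires a little care is strictness, since Theorem~\ref{UpperBoundOnSum} demands $>$ rather than $\ge$; this is why one should keep the contribution $2^{k}-k-1$ of the non-singleton subsets rather than discarding it, as that term is the slack that turns $\ge$ into $>$. (For $k=1$ this slack vanishes, but the conjecture is trivial for a single nonzero speed, so that edge case is irrelevant.)
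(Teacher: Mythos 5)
Your proposal is correct and follows the same route as the cited proof: lower-bound the subset-gcd sum by the singleton terms (plus the trivial $\gcd\ge 1$ contribution of the larger subsets), convert $\sum_i v_i$ to $\prod_i v_i$ via AM--GM, and invoke Theorem~\ref{UpperBoundOnSum}. Your attention to the strictness of the inequality and to the sign/zero edge cases is appropriate but does not change the substance of the argument.
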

Now the proof relies on finding many small prime divisors of the product of the speeds in any minimal counterexample for $k$ runners. With enough such divisors, we get a lower bound on the product that is larger than the upper bound given in Corollary \ref{cor:UpperBoundOnProduct}, implying that there is no such counterexample. The following lemmas from \cite{Rosenfeld2025Sep} allow to find such divisors.

\begin{lemma}[{\cite{Rosenfeld2025Sep}}]\label{lem:oldCovering}
Let $k\ge3$ be an integer such that the lonely runner conjecture holds for $k-1$. Let $p\in \mathbb{N}$ be a positive integer such that for all 
$v_1,\ldots, v_k\in \{0,\ldots, (k+1)p-1\}$ with
\begin{enumerate}[label=(\roman*)]
\item for all subset $S\in \{v_1,\ldots, v_k\}$ of size $k-1$, we have $\gcd(S\cup\{(k+1)p\})=1$,
\item for all $i$, $v_i$ is not divisible by $p$,
\end{enumerate}
there exists $t\in \{0,\ldots,(k+1)p-1\}$ such that for all $i$,
\[
\left\|\frac{tv_i}{(k+1)p}\right\|\ge \frac{1}{k+1}\,.
\]
Then for any set of $k$ distinct integers $\{v_1, \ldots, v_k\}$ that does not have the LR property, we have 
  \[
 p \text{ divides } \prod_{i=1}^{k}v_i.
  \]
\end{lemma}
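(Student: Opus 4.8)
The plan is to argue by contraposition: assume $p$ does \emph{not} divide $\prod_i v_i$ for some set of $k$ distinct integers $\{v_1,\dots,v_k\}$, and show this set has the LR property. Actually, re-reading the statement, what we want is: given the covering hypothesis on $p$, any $k$-set without the LR property must have $p \mid \prod v_i$. So I'd fix a set $\{v_1,\dots,v_k\}$ with none of the $v_i$ divisible by $p$, and produce a witness $t$ showing it \emph{does} have the LR property — contradiction. Wait, that's not quite right either: I need the set to not have the LR property to begin with, so the contrapositive target is: if $p \nmid \prod v_i$ then $\{v_1,\dots,v_k\}$ has the LR property. That's the clean statement to prove.

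**Reduction step.** First I would reduce to the case $\gcd(v_1,\dots,v_k)=1$ by dividing out the gcd (the LR property is invariant under scaling all speeds). Then I'd invoke Theorem~\ref{UpperBoundOnSum} / Corollary~\ref{cor:UpperBoundOnProduct}: since the conjecture holds for $k-1$, a minimal counterexample for $k$ runners would have to satisfy the gcd conditions and have large product. So I may assume the $k$-set under consideration is such that the LR property reduces to finding a \emph{rational} witness $t$ with bounded denominator — this is the standard reduction that for checking the LR property it suffices to test $t$ of the form $t = a/N$ for a suitable modulus $N$. The key idea is to choose $N = (k+1)p$: since no $v_i$ is divisible by $p$, and after arranging the $\gcd$ side-conditions (i)–(ii), the residues $v_i \bmod (k+1)p$ fall into exactly the configuration covered by the hypothesis.

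**The main argument.** Given the hypothesis, for the residues $\bar v_i = v_i \bmod (k+1)p$ there exists $t \in \{0,\dots,(k+1)p-1\}$ with $\|t\bar v_i/((k+1)p)\| \ge 1/(k+1)$ for all $i$. Since $\|t v_i/((k+1)p)\| = \|t\bar v_i/((k+1)p)\|$, the real number $t/((k+1)p)$ is exactly a witness for the LR property of $\{v_1,\dots,v_k\}$. So the set \emph{has} the LR property, contradicting the assumption that it doesn't — hence $p \mid \prod v_i$.

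**The main obstacle.** The delicate point is verifying that the side-conditions (i) and (ii) in the hypothesis can actually be arranged for the residues of an arbitrary counterexample. Condition (ii) is immediate from $p \nmid v_i$. Condition (i) — that every $(k-1)$-subset of residues together with $(k+1)p$ has gcd $1$ — is the subtle part: one must show that if some $(k-1)$-subset shares a common factor $d>1$ with $(k+1)p$, then one can divide through by $d$ and reduce to a smaller instance, using the assumption that the conjecture holds for $k-1$ to handle the degenerate case, or else that this cannot happen in a minimal counterexample. Handling this reduction cleanly — tracking how dividing by a common factor interacts with the modulus $(k+1)p$ and with the primality/size of $p$ — is where the real work lies; the rest is bookkeeping with the identity $\|tv_i/N\| = \|t(v_i \bmod N)/N\|$.
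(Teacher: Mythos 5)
Your overall skeleton is the right one and matches the paper's proof of the analogous Lemma~\ref{lem:needOfCovering}: pass to the contrapositive, reduce to $\gcd(v_1,\ldots,v_k)=1$, replace each speed by its residue modulo $(k+1)p$, note that conditions (i)--(ii) are inherited by the residues, apply the covering hypothesis to get a time $t$, and lift $t/((k+1)p)$ back to a witness for the original speeds (using $\left\|\tfrac{tw_i}{N}\right\|=\left\|\tfrac{t(w_i\bmod N)}{N}\right\|$). Condition (ii) you handle correctly: $p\nmid\prod v_i$ forces $p\nmid v_i$ for every $i$, and this survives reduction mod $(k+1)p$ since $p\mid(k+1)p$.

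However, there is a genuine gap exactly where you flag ``the real work.'' You do not actually establish condition (i), and the mechanism you sketch for it --- ``divide through by $d$ and reduce to a smaller instance,'' or appeal to minimality of a counterexample --- is not how this step works and would not go through: you cannot rescale only $k-1$ of the speeds while preserving the LR property, and the lemma is stated for an \emph{arbitrary} set without the LR property, so no minimality is available. The correct tool is Lemma~\ref{StrongerGCDCondition} (stated in the paper immediately before this lemma, and whose proof idea is displayed in the proof of Lemma~\ref{EvenStrongerGCDCondition}): if $\gcd(v_1,\ldots,v_k)=1$ but some $(k-1)$-subset has gcd $d>1$, then the set automatically \emph{has} the LR property, by taking a lonely-runner time $t$ for the $k-1$ speeds divisible by $d$ (using the conjecture for $k-1$) and shifting it by $j/d$, $j=0,\ldots,d-1$; a pigeonhole on the $d$ evenly spaced images of the remaining speed produces a valid shift. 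Hence every $(k-1)$-subset $S$ of a set without the LR property satisfies $\gcd(S)=1$, a fortiori $\gcd(S\cup\{(k+1)p\})=1$, and this property passes to the residues because any common divisor of the residues and of $(k+1)p$ divides the original speeds. Your detour through Corollary~\ref{cor:UpperBoundOnProduct} and a ``reduction to rational witnesses with bounded denominator'' is also superfluous: no bound on the product is used here, and one does not need to show that rational times \emph{suffice} to test the LR property --- one only needs that the particular rational time produced by the hypothesis \emph{is} a witness.
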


For each prime $p$, a finite computation can be used to verify the condition of this lemma. Since many primes happen to satisfy this condition when $k=7$, we obtain a list of many primes that must divide the product of the speeds of a minimal counterexample. With enough such primes, one can contradict the upper bound on the product of the speeds given by Corollary \ref{cor:UpperBoundOnProduct} and prove the lonely runner conjecture for $k+1$ runners.
One ingredient of the proof of Lemma \ref{lem:oldCovering} is the following lemma (the same idea was already used in \cite{Bienia1998Jan}).
\begin{lemma}[{\cite{Rosenfeld2025Sep}}]\label{StrongerGCDCondition}
Let $k \geq 3$ be an integer such that the lonely runner conjecture holds for $k-1$. Let $v_1, \ldots, v_k$ be a set of integers with $\gcd(\{v_1, \ldots, v_k\})=1$ but $\gcd(v_1,\ldots, v_{k-1})\not=1$. Then $\{v_1, \ldots, v_k\}$ has the LR property.
\end{lemma}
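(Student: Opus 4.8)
The plan is to reduce to the lonely runner conjecture for $k-1$ runners by quotienting out the common factor $d=\gcd(v_1,\dots,v_{k-1})$ of the first $k-1$ speeds, and then to exploit the remaining freedom in the choice of time to handle $v_k$ as well. Note first that $d\ge 2$ by hypothesis, and that $\gcd(d,v_k)=1$ since $\gcd(v_1,\dots,v_k)=1$; I also take the $v_i$ to be distinct and nonzero, which is automatic in the intended application where they are the pairwise speed differences of a putative counterexample.

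I would start from the fact that $v_1/d,\dots,v_{k-1}/d$ are $k-1$ nonzero integers, so the lonely runner conjecture for $k-1$ supplies a real number $t^{\ast}$ with $\|t^{\ast}v_i/d\|\ge \tfrac1k$ for every $i\le k-1$. The key observation is that this single $t^{\ast}$ can be ``spread out'' into $d$ candidate times: for $n\in\{0,1,\dots,d-1\}$ set $t_n=(t^{\ast}+n)/d$. Because $v_i/d$ and $n\,v_i/d$ are integers, $\|t_n v_i\|=\|(t^{\ast}+n)(v_i/d)\|=\|t^{\ast}(v_i/d)\|\ge\tfrac1k\ge\tfrac1{k+1}$ for all $i\le k-1$ and all $n$, so every one of $t_0,\dots,t_{d-1}$ already works for the first $k-1$ runners. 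It then remains only to choose $n$ so that $\|t_n v_k\|\ge\tfrac1{k+1}$.

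For the last step I would look at $t_n v_k=\tfrac{t^{\ast}v_k}{d}+\tfrac{n v_k}{d}$: as $n$ ranges over a complete residue system modulo $d$ so does $n v_k$ (this is where $\gcd(v_k,d)=1$ is used), so modulo $1$ the values $t_0v_k,\dots,t_{d-1}v_k$ are precisely the $d$ equally spaced points of an arithmetic progression of common difference $1/d$ on $\mathbb{R}/\mathbb{Z}$. Such a progression has covering radius $1/(2d)$ (every point of $\mathbb{R}/\mathbb{Z}$ is within $1/(2d)$ of one of its points), so one of its points lies within $1/(2d)$ of $\tfrac12$ and is therefore at distance at least $\tfrac12-\tfrac1{2d}=\tfrac{d-1}{2d}$ from the integers; since $d\ge2$ and $k\ge3$ this is at least $\tfrac14\ge\tfrac1{k+1}$. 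Taking $n$ to be the corresponding index then gives $\|t_n v_i\|\ge\tfrac1{k+1}$ for every $i\in\{1,\dots,k\}$, i.e. $\{v_1,\dots,v_k\}$ has the LR property.

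I do not expect a genuine obstacle here: the argument is essentially a pigeonhole over the $d$ times $t^{\ast}/d+n/d$, which are all interchangeable as far as the first $k-1$ runners are concerned, combined with the elementary fact that $d\ge 2$ equally spaced points on the circle cannot all be close to the origin. The only mild point of care is to make sure the reduction (dividing by $d$ and invoking the $(k-1)$-runner case) is legitimate, for which it suffices that the $v_i$ be nonzero and distinct.
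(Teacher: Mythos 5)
Your proof is correct and follows essentially the same route as the paper's (the cited proof, whose structure is mirrored in the proof of Lemma~\ref{EvenStrongerGCDCondition} here): apply the $(k-1)$-runner case to the speeds divided by $d=\gcd(v_1,\ldots,v_{k-1})$, spread the resulting time into the $d$ shifts $t^{\ast}/d + n/d$ which leave the first $k-1$ runners unaffected, and observe that the corresponding $d$ equally spaced positions of $v_k$ cannot all lie in the forbidden arc. Your ``point nearest to $\tfrac12$'' bound $\tfrac12-\tfrac1{2d}\ge\tfrac14\ge\tfrac1{k+1}$ is a clean way to finish and handles all $d\ge 2$ uniformly.
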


For the current article, we will use a stronger version of Lemma \ref{lem:oldCovering} given in Lemma \ref{lem:needOfCovering}. For this, we provide a version of Lemma \ref{StrongerGCDCondition} tailored for the case $k=8$ with more general conditions. Hence, Lemma \ref{lem:needOfCovering} is also tailored to the case $k=8$ which allows a weaker condition (i.e., we have fewer $k$-uples to verify). At the same time, we also generalize a bit the conditions of the lemma by adding a coefficient $c$ that allows to apply our lemma to powers of small primes as well.

\section{Proof of the lonely runner conjecture for nine runners}
We now provide our version of Lemma \ref{StrongerGCDCondition} tailored for $k=8$. The proof is similar to the proof of Lemma \ref{StrongerGCDCondition} given in \cite{Rosenfeld2025Sep}.

\begin{lemma}\label{EvenStrongerGCDCondition}
Let $k \geq 3$ be an integer such that the lonely runner conjecture holds for all $k'<k$. Let $v_1, \ldots, v_8$ be a set of integers with $\gcd(\{v_1, \ldots, v_8\})=1$. Then
\begin{itemize}
\item if at least $6$ of the $v_i$ are divisible by $3$, then $\{v_1, \ldots, v_8\}$ have the LR property,
\item if at least $5$ of the $v_i$ are divisible by $9$, then $\{v_1, \ldots, v_8\}$ have the LR property.
\end{itemize}
\end{lemma}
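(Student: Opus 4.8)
The plan is to adapt the classical ``factor out a common divisor and shift time'' argument of \cite{Bienia1998Jan} that underlies Lemma~\ref{StrongerGCDCondition}, but to run it more economically. Instead of requiring that $k-1 = 7$ of the speeds share a common divisor, we will only make $6$ of them share the divisor $3$ for the first bullet, and $5$ of them share the divisor $9$ for the second; we pay for this by counting carefully the time-shifts that can be ``bad'' for the speeds that are left over. The $v_i$ may be taken distinct and nonzero, since they arise as differences of distinct runner speeds, and the goal is a real $t$ with $\|t v_i\| \ge \frac19$ for all $i$ (the statement concerns $8$ speeds, so effectively $k = 8$ and the hypothesis provides the conjecture for all indices $k' < 8$).

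First I would dispose of the degenerate sub-cases. As $\gcd(v_1,\dots,v_8) = 1$, the $v_i$ are not all divisible by $3$; and if $7$ of them are, relabel so that $3 \mid v_1,\dots,v_7$, whence $\gcd(v_1,\dots,v_7) \neq 1$ and Lemma~\ref{StrongerGCDCondition} finishes the proof. So for the first bullet we may assume exactly six speeds are divisible by $3$, say $3 \mid v_1,\dots,v_6$ while $3 \nmid v_7$ and $3 \nmid v_8$. For the second bullet, if $6$ of the $v_i$ are divisible by $9$ then $6$ are divisible by $3$ and we are reduced to the first bullet; so we may assume exactly five are divisible by $9$, say $9 \mid v_1,\dots,v_5$. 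Then $v_6,v_7,v_8$ cannot all be divisible by $3$ (else every $v_i$ would be, contradicting $\gcd(v_1,\dots,v_8)=1$), so at most two of $v_6,v_7,v_8$ are divisible by $3$, and any such one satisfies $3 \mid v_j$ but $9 \nmid v_j$.

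The core step is then as follows. Write $v_i = 3 w_i$ for $i \le 6$ in the first case and $v_i = 9 w_i$ for $i \le 5$ in the second; the $w_i$ are distinct nonzero integers. Applying the conjecture for $6$ (respectively for $5$), which holds since $6, 5 < 8$, to the integers $w_1,\dots,w_6$ (respectively $w_1,\dots,w_5$) yields a real $s$ with $\|s w_i\| \ge \frac17$ for $i \le 6$ (respectively $\|s w_i\| \ge \frac16$ for $i \le 5$). Put $t_n = (s+n)/3$ for $n \in \{0,1,2\}$ in the first case and $t_n = (s+n)/9$ for $n \in \{0,1,\dots,8\}$ in the second. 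Since $n$ and $w_i$ are integers, $\|t_n v_i\| = \|(s+n) w_i\| = \|s w_i\|$, so every $t_n$ already keeps the speeds carrying the common divisor at distance at least $\frac17 > \frac19$ (respectively $\frac16 > \frac19$) from the origin; it remains to find one $n$ that is simultaneously good for the two (respectively three) remaining speeds. For a remaining speed $v_j$, the numbers $t_n v_j \bmod 1$ run over an arithmetic progression on the unit circle with step $v_j/3$ (respectively $v_j/9$) reduced mod $1$: in the first case $3 \nmid v_j$, giving three distinct points spaced $\frac13$ apart; in the second case, nine distinct points spaced $\frac19$ apart when $3 \nmid v_j$, and three distinct points spaced $\frac13$ apart (each hit by three of the shifts) when $3 \mid v_j$. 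The bad set $\{x : \|x\| < \frac19\}$ is an open arc of length $\frac29$, which contains at most two points of a set spaced $\frac19$ apart and at most one point of a set spaced $\frac13$ apart. Hence in the first case each of the two remaining speeds rules out at most one of the three shifts, leaving a good one; in the second case a remaining speed with $3 \nmid v_j$ rules out at most two shifts and one with $3 \mid v_j$ rules out at most $1 \cdot 3 = 3$ shifts, and since at most two of $v_6,v_7,v_8$ are divisible by $3$ the total is at most $3 + 3 + 2 = 8 < 9$, again leaving a good shift. Setting $t = t_n$ for that shift gives the LR property.

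I expect the only genuine obstacle to be the final counting step: with only $6$ speeds carrying the factor $3$ (or $5$ carrying the factor $9$), the number of available shifts is exactly $3$ (or $9$), and the worst-case number of bad shifts contributed by the leftover speeds has to stay \emph{strictly} below that threshold. This is precisely why the lemma is phrased with $3$ on six speeds and $9$ on five speeds rather than with an arbitrary divisor: for instance, if only five speeds were divisible by $3$ we would still have just three shifts but three leftover speeds, each possibly bad for one shift, and $3 \not< 3$. So the work is to check that in every admissible distribution of divisibility by $3$ among the leftover speeds the bad count stays below the shift count, which is exactly what the short case analysis above handles, crucially using $\gcd(v_1,\dots,v_8) = 1$ to exclude the problematic distributions.
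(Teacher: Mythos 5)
Your proof is correct and follows essentially the same strategy as the paper: reduce to exactly six speeds divisible by $3$ (resp.\ exactly five divisible by $9$) via Lemma~\ref{StrongerGCDCondition} and the gcd hypothesis, apply the conjecture for $6$ (resp.\ $5$) runners to the speeds carrying the common factor, then shift the time by multiples of $1/3$ (resp.\ $1/9$) and count how many shifts the leftover speeds can spoil. The only (harmless) divergence is in the second item, where the paper first invokes the first item to reduce to the case where $v_6,v_7,v_8$ are all coprime to $3$, whereas you handle leftover speeds divisible by $3$ but not $9$ directly with the $3+3+2<9$ count; both arguments are valid.
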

\begin{proof}
We start with the first item, and we assume that at least $6$ of the $v_i$ are divisible by $3$. By the $\gcd$ condition at least one of the $v_i$ is not divisible by $3$ and by Lemma \ref{StrongerGCDCondition}, if exactly one of them is not divisible by $3$ then the set has the LR property, so the only remaining case is when exactly $6$ of the $v_i$ are divisible by $3$. By reordering the $v_i$, we can assume that $v_1,\ldots, v_6$ are divisible by $3$ and $v_7$ and $v_8$ are not divisible by $3$.
Let $t$ be a lonely runner solution for $\{v_1,\ldots,v_{6}\}$, that is, for all $i\in \{1,\ldots, 6\}$, 
\[
\left\|v_i\cdot t\right\|\ge \frac{1}{7}>\frac{1}{9}\,.
\]
Hence, for all $t'\in \{t, t+1/3, t+2/3\}$ and for all $i\in \{1,\ldots, 6\},$
\[
\left\|v_i\cdot t'\right\|=\left\|v_i\cdot t\right\|\ge \frac{1}{9}\,,
\]
where the first equality holds since $v_i$ is divisible by $3$.
On the other hand, for $i\in \{7,8\}$, $\gcd(v_i,3)=1$, and we have 
\[
\left\{v_i\cdot (t+j/3) \mod 1: j\in\{0,1,2\}\right\}
=\left\{v_i\cdot t +j/3 \mod 1: j\in\{0,1,2\}\right\}\,.
\]

For each $i\in \{7,8\}$, this set corresponds to three equally spaced points on the unit circle, and at most one of these points satisfies $\left\|v_i\cdot (t+j/3)\right\|<1/9$ (the forbidden interval has length $2/9<1/3$). Therefore, there exists at least one $t'\in\{t,t+1/3,t+2/3\}$ for which $\left\|v_i\cdot t'\right\|\ge 1/9$ holds for both $i=7,8$, and thus for all $i=1,\dots,8$ as desired.

The argument for the second item is similar. From the first item, we only need to take care of the case where exactly $5$ of the $v_i$ are divisible by $9$, and the others are not divisible by $3$. That is, up to reordering, $v_1,\ldots, v_5$ are divisible by $9$, and $v_6,v_7,v_8$ are not divisible by $3$. As previously, we let $t$ be a lonely runner solution for $v_1,\ldots,v_{5}$, and we consider the set $S=\{t, t+1/9,\ldots, t+8/9\}$. For any $t'\in S$ and any $i\in\{1,2,3,4,5\}$, $||v_it'||= ||v_it||\ge 1/6>1/9$. On the other hand, for $i\in \{6,7,8\}$, $\gcd(v_i,9)=1$, and we have
\[
\left\{v_i\cdot (t+j/9) \mod 1: j\in\{0,\ldots,8\}\right\}
=\left\{v_i\cdot t +j/9 \mod 1: j\in\{0,\ldots,8\}\right\}\,.
\]
For each $i\in \{6,7,8\}$, this set corresponds to nine equally spaced points on the unit circle, and the forbidden interval is an open interval of size $2/9$ so at most two such points are forbidden for each $i$. Hence, at least $3$ of the choices of $t'\in S$ are such that $\left\|v_i\cdot t'\right\|\ge \frac{1}{9}$ for all $i\in \{1,\ldots,8\}$, as desired.
\end{proof}

We are now ready to prove our main lemma.
\begin{lemma}\label{lem:needOfCovering}
Suppose that $d,c\in \mathbb{N}$ are positive integers such that for all 
$v_1,\ldots, v_8\in \{0,\ldots, 9cd-1\}$ with
\begin{enumerate}[label=(\roman*)]
\item for all subset $S\in \{v_1,\ldots, v_8\}$ of size $7$, we have $\gcd(S\cup\{9cd\})=1$,
\item at most $5$ of the $v_i$ are divisible by $3$,
\item at most $4$ of the $v_i$ are divisible by $9$,
\item $\prod_{i=1}^8 v_i$ is not divisible by $d$,
\end{enumerate} 
there exists $t\in \{0,\ldots,9cd-1\}$ such that for all $i$,
\[
\left\|\frac{tv_i}{9cd}\right\|\ge \frac{1}{9}\,.
\]
Then for any set of $8$ distinct integers $\{w_1, \ldots, w_8\}$ that does not have the LR property, we have 
  \[
 d \text{ divides } \prod_{i=1}^{8}w_i.
  \]
\end{lemma}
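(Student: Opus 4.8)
The plan is to argue by contradiction in the style of Lemma \ref{lem:oldCovering}: assume there is a set of $8$ distinct integers $\{w_1,\dots,w_8\}$ with no LR property such that $d \nmid \prod_i w_i$, and derive that the hypothesis of the lemma is violated, i.e.\ produce a tuple $v_1,\dots,v_8 \in \{0,\dots,9cd-1\}$ satisfying (i)--(iv) for which no valid $t \in \{0,\dots,9cd-1\}$ exists. First I would reduce to the case $\gcd(w_1,\dots,w_8)=1$ (otherwise divide out the common factor; this preserves both the failure of the LR property and non-divisibility by $d$ since $d \nmid \prod w_i$ is only made easier). Then I would set $v_i \equiv w_i \pmod{9cd}$, taking representatives in $\{0,\dots,9cd-1\}$. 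The key observation, exactly as in \cite{Rosenfeld2025Sep}, is that if $\{w_1,\dots,w_8\}$ has no LR solution, then in particular there is no solution of the restricted form $t = s/(9cd)$ with $s$ an integer, because $\|w_i t\|$ for such $t$ depends only on $v_i = w_i \bmod 9cd$; so the $v_i$ also admit no good $t \in \{0,\dots,9cd-1\}$. Hence to contradict the hypothesis it suffices to check that the $v_i$ satisfy conditions (i)--(iv).

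The heart of the argument is verifying (i)--(iv), and here is where Lemma \ref{EvenStrongerGCDCondition} does the work that the plain gcd condition did in the old proof. For (iv): since $\gcd(9cd, \text{stuff})$\dots\ actually more directly, $d \mid 9cd$, so $v_i \equiv w_i \pmod d$ for each $i$, whence $\prod v_i \equiv \prod w_i \pmod d$; since $d \nmid \prod w_i$ we get $d \nmid \prod v_i$, giving (iv). For (ii) and (iii): suppose (ii) fails, i.e.\ at least $6$ of the $v_i$ are divisible by $3$. Since $3 \mid 9cd$, divisibility by $3$ is the same for $v_i$ and $w_i$, so at least $6$ of the $w_i$ are divisible by $3$. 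But $\gcd(w_1,\dots,w_8)=1$, so by the first item of Lemma \ref{EvenStrongerGCDCondition} (applied with $k=9$, valid since the LR conjecture holds for all $k'<9$ by the cases $k=3,\dots,7$ and $k=8$) the set $\{w_1,\dots,w_8\}$ has the LR property, contradicting our assumption. The case (iii) fails is identical, using the second item and the fact that $9 \mid 9cd$ so divisibility by $9$ is preserved. For (i): if some $7$-subset $S$ of the $v_i$ together with $9cd$ had gcd $g>1$, take a prime $p \mid g$; then $p \mid 9cd$, so $p$ divides each of the corresponding $w_i$ in $S$; moreover $p \mid w_j$ for the remaining index $j$ would contradict $\gcd(w_1,\dots,w_8)=1$, so the corresponding $7$-subset of the $w_i$ has a common prime factor $p$ while the $8$th $w_j$ does not — this is exactly the situation of Lemma \ref{StrongerGCDCondition} (with the roles of the $k-1$ indices being the $7$ in $S$), which forces $\{w_1,\dots,w_8\}$ to have the LR property, again a contradiction.

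Putting it together: the $v_i$ satisfy all of (i)--(iv) yet admit no good $t$, contradicting the hypothesis of the lemma; hence no such $\{w_1,\dots,w_8\}$ exists, i.e.\ every $8$-set without the LR property has $d \mid \prod_i w_i$. The main obstacle is purely bookkeeping: one must be careful that the divisors $3$, $9$, and $d$ all divide the modulus $9cd$ so that the relevant arithmetic conditions transfer verbatim between the $w_i$ and their reductions $v_i$, and one must confirm that Lemma \ref{EvenStrongerGCDCondition} is applicable (it needs the LR conjecture for all $k'<9$, which is exactly what we have available: $k=3,\dots,7$ are classical/\cite{Rosenfeld2025Sep} and $k=8$ is the main result of \cite{Rosenfeld2025Sep}). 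One subtlety worth a sentence in the writeup: we need $v_i \neq v_j$ is \emph{not} required by the hypothesis (it only asks $v_1,\dots,v_8 \in \{0,\dots,9cd-1\}$, allowing repeats), so no distinctness argument modulo $9cd$ is needed — the distinctness of the $w_i$ is used only implicitly through the LR failure hypothesis.
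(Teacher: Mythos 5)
Your proposal is correct and follows essentially the same route as the paper's proof: reduce to $\gcd(w_1,\dots,w_8)=1$, reduce the $w_i$ modulo $9cd$, transfer conditions (i)--(iii) to the $v_i$ via Lemmas \ref{StrongerGCDCondition} and \ref{EvenStrongerGCDCondition} (using that $3$, $9$, and $d$ all divide the modulus), obtain (iv) from the assumed non-divisibility, and derive a contradiction because a good $t$ for the $v_i$ would be a good $t$ for the $w_i$. The only differences are cosmetic (you place the whole argument under the contradiction hypothesis, whereas the paper establishes (i)--(iii) unconditionally first).
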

\begin{proof}
Let $d,c$ be as in the theorem statement and let $\{w_1, \ldots, w_8\}$ be a set of $8$ distinct integers without the LR property. Without loss of generality, we can assume $\gcd(\{w_1, \ldots, w_8\})=1$, otherwise we can divide every $w_i$ by $\gcd(w_1, \ldots, w_8)$ as this new set does not have the LR property either and the product of the speed divides the previous product.

For all $i$, let $v_i$ be the remainder of the euclidean division of $w_i$ by $9cd$. 
Since the set $\{w_1, \ldots, w_8\}$ does not have the LR property, we have by Lemma \ref{StrongerGCDCondition} and Lemma \ref{EvenStrongerGCDCondition} that
\begin{itemize}
\item for all subset $S\in \{w_1,\ldots, w_8\}$ of size $7$, we have $\gcd(S)=1$, so in particular $\gcd(S\cup\{9cd\})=1$,
\item at most $5$ of the $w_i$ are divisible by $3$,
\item at most $4$ of the $w_i$ are divisible by $9$,
\end{itemize}
Since each $v_i$ is obtained by reducing $w_i$ modulo $9cd$, the same properties hold for the $v_i$, that is we have (i), (ii) and (iii) from our Theorem hypothesis.

Now, suppose for the sake of contradiction that $\prod_{i=1}^8 w_i$ is not divisible by $d$, then $\prod_{i=1}^8 v_i$ is not divisible by $d$ either which implies (iv).
By our theorem hypothesis there exists $t\in \{0,\ldots,9cd-1\}$ such that for all $i$,
\[
\left\|\frac{tv_i}{9cd}\right\|\ge \frac{1}{9}\,.
\]
For all $i$, $w_i\equiv v_i \mod 9cd$ implies,
\[
\left\|\frac{tw_i}{9cd}\right\|=\left\|\frac{tv_i}{9cd}\right\|\ge \frac{1}{9}\,,
\]
which contradicts the fact that  $\{w_1, \ldots, w_8\}$ does not have the LR property. Hence, $d$ divides $\prod_{i=1}^8 w_i$, as desired.
\end{proof}
Using this lemma, and a computer program that verifies its conditions, we can find many integers $d$ that divide the product of the speeds in any minimal counterexample for $9$ runners. In particular, let $S$ be the set given by
\begin{align*}
  S=\{&64,81,25,121,169,17,19,23,29,31,37,41,43,47,53,59,61,67,71,73,79,83,89,97,101,\\
  &103,107,109,113,127,131,137,139,149,151,157,163,167,173,179,181,191\}\,, 
\end{align*}
and for all $d\in S$, let $c_d$ be such that\footnote{These values are the smallest values of $c$ for which our program was able to verify the conditions of Lemma \ref{lem:needOfCovering}.}
\[
c_d=
\begin{cases}
5 & \text{if } d=25\,,\\
3 & \text{if } d=17\text{ or }19\,,\\
2 & \text{if } d\in \{64,23,29,41\}\,,\\
1 & \text{otherwise.}
\end{cases}
\]
The program provided with this article \cite{gitRepo} can verify that for all $d\in S$, the pair $(d,c_d)$ satisfies the conditions of Lemma \ref{lem:needOfCovering}. Note that $S$ contains a power of every prime number not larger than $191$ other than $7$.\footnote{It is not hard to prove that $7$ must divide the product, but it does not help. On the other hand with $p=49$, if there is a $c_p$ that respects the conditions of our lemma it is at least $5$ (as we whecked for $\{1,2,3,4\}$) and the verification is would be more expensive than the one for $191$.} This provides enough divisors to reach the bound given by Corollary \ref{cor:UpperBoundOnProduct} and prove the lonely runner conjecture for $9$ runners.
\begin{proof}[Proof of the lonely runner conjecture for nine runners]
For the sake of contradiction suppose there exists a set $\{v_1,\ldots,v_8\}$ of $8$ distinct integers without the LR property. Without loss of generality, we may assume  $\gcd(\{v_1,\ldots,v_8\})=1$. We let 
\[
\mathcal{P}=\prod_{i=1}^8 v_i.
\]

By Corollary \ref{cor:UpperBoundOnProduct}, we have
\begin{equation}\label{eq:product8}
\mathcal{P} \le \left[\frac{\binom{9}{2}^{7}}{8}\right]^8\approx 8.47657\times 10^{79} < 9\times 10^{79}\,.
\end{equation}

Our implementation of Lemma \ref{lem:needOfCovering}, can be used to verify that $\mathcal{P}$ is divisible by $d$, for all $d\in S$ where
\begin{align*}
  S=\{&64,81,25,121,169,17,19,23,29,31,37,41,43,47,53,59,61,67,71,73,79,83,89,97,101,\\
  &103,107,109,113,127,131,137,139,149,151,157,163,167,173,179,181,191\}\,.
\end{align*}
Note that the elements of $S$ are pairwise co-prime (as each is a power of a distinct prime).
 Hence, $\mathcal{P}$ is divisible by 
\[
 \lcm(S)=\prod_{d\in S} d \approx 9.09778\times 10^{79}> 9\times 10^{79}\,.
\]
Since $\mathcal{P}$ is a positive integer this contradicts \eqref{eq:product8}. Therefore, there does not exist such a set $\{v_1,\ldots,v_8\}$ of $8$ distinct integers without the LR property. This concludes the proof of the lonely runner conjecture for nine runners.
\end{proof}

\section{Implementation details}\label{sec:implementation}
Following the approach of \cite{Rosenfeld2025Sep}, before discussing the implementation details, we first reformulate the condition of Lemma \ref{lem:needOfCovering} in a formulation closer to set-cover.

Given integers $c$ and $d$, we say that $v\in \{1,\ldots,9cd-1\}$ \emph{covers} $j\in\mathbb{N}$ if 
$\left\|\frac{jv}{9cd}\right\|< \frac{1}{8}$. We then say that a set $\{v_1,\ldots,v_i\}$ covers a set $S\subseteq\mathbb{N}$ if any $s\in S$ is covered by at least one of the $v_i$. Intuitively, if $\{v_1,\ldots,v_i\}$ does not cover $j$, then the time $\frac{j}{9cd}$ is a suitable time to verify the LR property for $\{v_1,\ldots,v_i\}$.

For all positive integers $i$, $j$, and $t$, we have $\left\|\frac{ij}{t}\right\|=\left\|\frac{(t-i)j}{t}\right\|$. This means that if $i$ covers $j$, then $i$ covers $9cd-j$, and $9cd-i$ covers $j$ and $9cd-j$.
Using this notion of cover and these symmetries\footnote{Another obvious symmetry that we do not exploit is that if $i$ covers $j$, then $j$ covers $i$.}, Lemma \ref{lem:needOfCovering} can be reformulated equivalently as follows.

\begin{lemma}\label{lem:reformulated}
Let $c,d\in \mathbb{N}$ be positive integers.
Suppose that there exists no $8$-uple $v_1,\ldots, v_8\in \{1,\ldots, \lfloor9cd/2\rfloor\}$ with
\begin{enumerate}[label=(\roman*)]
\item for all subset $S\in \{v_1,\ldots, v_8\}$ of size $7$, we have $\gcd(S\cup\{9cd\})=1$,
\item at most $5$ of the $v_i$ are divisible by $3$,
\item at most $4$ of the $v_i$ are divisible by $9$,
\item $\prod_{i=1}^8 v_i$ is not divisible by $d$,
\item $v_1,\ldots, v_k$ covers $\{1,\ldots, \lfloor9cd/2\rfloor\}$.
\end{enumerate} 
Then the conclusion of Lemma \ref{lem:needOfCovering} holds, that is,  for any set of $8$ distinct integers $\{v_1, \ldots, v_8\}$ without the LR property, we have 
  \[
 d \text{ divides } \prod_{i=1}^8 v_i.
  \]
\end{lemma}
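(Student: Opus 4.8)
The plan is to show that Lemma~\ref{lem:reformulated} is nothing more than a restatement of Lemma~\ref{lem:needOfCovering} phrased in terms of the covering terminology, so that its conclusion follows immediately once we check the two hypotheses are equivalent. Concretely, I would argue that the hypothesis of Lemma~\ref{lem:reformulated} (``no $8$-uple in $\{1,\ldots,\lfloor 9cd/2\rfloor\}$ with (i)--(v)'') implies the hypothesis of Lemma~\ref{lem:needOfCovering} (``for every $8$-uple in $\{0,\ldots,9cd-1\}$ satisfying (i)--(iv) there is a good time $t$''), and then invoke Lemma~\ref{lem:needOfCovering} verbatim.

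First I would unpack the definition of \emph{covers}: $v$ covers $j$ iff $\left\|\tfrac{jv}{9cd}\right\| < \tfrac18$, and a tuple $(v_1,\ldots,v_8)$ covers $S$ iff every $s \in S$ is covered by some $v_i$. The key observation already recorded in the excerpt is the symmetry $\left\|\tfrac{ij}{t}\right\| = \left\|\tfrac{(t-i)j}{t}\right\|$, which gives two reductions. On one hand, a time $t \in \{0,\ldots,9cd-1\}$ witnesses the conclusion of Lemma~\ref{lem:needOfCovering} for $(v_1,\ldots,v_8)$ iff $t$ is \emph{not covered} by $(v_1,\ldots,v_8)$ in the above sense (note $\tfrac19 \le \|x\|$ is the complement of $\|x\| < \tfrac19$; one must be slightly careful here because the cover threshold stated is $\tfrac18$ rather than $\tfrac19$, so I would point out that using the looser threshold $\tfrac18$ only makes the covering hypothesis \emph{stronger}, hence still sufficient — or, if the intent is exactly $\tfrac19$, treat the $\tfrac18$ as a typo and use $\tfrac19$; either way the implication we need goes through). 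On the other hand, because $\|x\| = \|-x\|$, the value $t$ is covered iff $9cd - t$ is covered, so it suffices to restrict attention to $t \in \{1,\ldots,\lfloor 9cd/2\rfloor\}$ (the value $t=0$ never needs checking since $\|0\| = 0 < \tfrac19$, so $t=0$ is always covered and is never a valid witness); and likewise one may replace each $v_i$ by $9cd - v_i$ without changing which $j$ it covers, so one may assume $v_i \in \{1,\ldots,\lfloor 9cd/2\rfloor\}$ as well (the case $v_i = 0$ is excluded by condition (i), since $\gcd$ of a set containing $0$ and multiples of $9cd$ would not be $1$).

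The main step is then the contrapositive bookkeeping. Suppose the hypothesis of Lemma~\ref{lem:reformulated} holds but that of Lemma~\ref{lem:needOfCovering} fails: there is an $8$-uple $(v_1,\ldots,v_8) \in \{0,\ldots,9cd-1\}$ satisfying (i)--(iv) for which \emph{every} $t$ fails, i.e.\ for every $t$ some $v_i$ has $\left\|\tfrac{tv_i}{9cd}\right\| < \tfrac19$, equivalently $(v_1,\ldots,v_8)$ covers all of $\{0,\ldots,9cd-1\}$, and hence in particular covers $\{1,\ldots,\lfloor 9cd/2\rfloor\}$, which is condition (v). Replacing each $v_i$ by $\min(v_i, 9cd - v_i) \in \{1,\ldots,\lfloor 9cd/2\rfloor\}$ preserves the covering relation (by the symmetry) and preserves conditions (i)--(iv): (i) because $\gcd$ is unaffected by replacing $v_i$ with $9cd - v_i$ modulo $9cd$ when the set already contains $9cd$; (ii),(iii) because $3 \mid v_i \iff 3 \mid 9cd - v_i$ and similarly for $9$; (iv) because this changes each $v_i$ only by a multiple of $9cd$, hence... — here I would need to check (iv) slightly more carefully, since $d \nmid \prod v_i$ is not obviously preserved under $v_i \mapsto 9cd - v_i$; the cleanest route is to observe that $d \mid 9cd$, so $9cd - v_i \equiv -v_i \pmod d$, whence $\prod (9cd - v_i) \equiv (-1)^8 \prod v_i = \prod v_i \pmod d$, so divisibility by $d$ is indeed preserved. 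Thus we obtain an $8$-uple in $\{1,\ldots,\lfloor 9cd/2\rfloor\}$ satisfying (i)--(v), contradicting the hypothesis of Lemma~\ref{lem:reformulated}. Therefore the hypothesis of Lemma~\ref{lem:needOfCovering} holds, and its conclusion — $d \mid \prod_{i=1}^8 w_i$ for every LR-property-free $8$-set — is exactly the conclusion claimed in Lemma~\ref{lem:reformulated}.

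I expect the only real subtlety, and the main obstacle, to be the threshold mismatch between the $\tfrac18$ appearing in the definition of \emph{covers} and the $\tfrac19$ appearing in the conclusion of Lemma~\ref{lem:needOfCovering}: one must confirm in which direction the implication is being used and either reconcile this as a harmless strengthening of hypothesis (v) or flag it as a typo for $\tfrac19$. Everything else is routine verification that the three invariances — fixing the $\gcd$ condition, the divisibility-by-$3$ and $-9$ counts, and divisibility by $d$ — survive the two symmetry reductions ($v \mapsto 9cd - v$ and $t \mapsto 9cd - t$), which I would dispatch in a couple of lines using $d \mid 9cd$ and $3 \mid 9cd$.
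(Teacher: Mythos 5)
Your proposal is correct and follows the same route the paper takes implicitly: the paper states Lemma~\ref{lem:reformulated} without proof, immediately after recording exactly the two symmetries $v\mapsto 9cd-v$ and $t\mapsto 9cd-t$ that your argument uses, and your handling of the $\tfrac18$ versus $\tfrac19$ threshold is the right call (the implication needed for the stated, one-directional conclusion goes through either way, since the larger covering radius only strengthens the nonexistence hypothesis). One small correction: the exclusion of $v_i=0$ does not follow from condition (i) as you claim, since $\gcd(0,a_1,\ldots,a_m)=\gcd(a_1,\ldots,a_m)$ can perfectly well equal $1$; it follows instead from condition (iv), because $v_i=0$ forces $\prod_j v_j=0$, which is divisible by every $d$. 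Also, in checking that (iv) survives the reduction you only replace those $v_i$ exceeding $9cd/2$, so the product changes by a factor $(-1)^{|B|}$ for some subset $B$ rather than $(-1)^8$; this is harmless, as divisibility by $d$ is insensitive to sign, but the bookkeeping should say so.
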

That is, we need to verify that there exists a set cover (property $(v)$) with some extra divisibility constraints (properties $(i)$ to $(iv)$). Our program implements basic backtracking techniques to exhaustively search for a "bad" cover for a selected pair $c$ and $d$. The implementation is similar to the one used in \cite{Rosenfeld2025Sep} for $7$ runners, but a few extra tricks improve the running time significantly. 

First, for each $i\in \{1,\ldots, \lfloor(k+1)p/2\rfloor\}$ the set of positions covered by $i$ is precomputed and stored as vectors of bits (\verb!bitsets! in C++), which provides really efficient bitwise operations. 

Then we construct a cover element by element, and for each partial cover $\{v_1,\ldots,v_i\}$, before adding a new element we verify some necessary conditions for the partial cover to be extendable to a cover. If these conditions are not met, we know that we can backtrack, and otherwise we try to add a new element to the cover. Our conditions are based on computing how many new integers can be covered by the best choice of integers among the remaining available integers. More precisely, we choose one of the uncovered integers $u$ to be special, and for every integer $x$ that covers $u$, we compute the largest number $s$ of new integers that can be covered by adding some integers $x'$ to the partial cover $\{v_1,\ldots,v_i,x\}$. If this number $s$ multiplied by the number of integers that can be added to the partial cover (that is $8-i-1$) is smaller than the number of integers that are not covered yet by $\{v_1,\ldots,v_i,x\}$, then there is no way to complete the cover to a full cover satisfying property $(v)$ of Lemma \ref{lem:reformulated}, and we know that $x$ should not be added to the cover\footnote{Choosing $x$ and then $X'$, might seem overly complicated compared to simply choosing $x'$ directly, but this allows for slightly more efficient partial pre-computations.}. This pruning strategy is the main difference with the implementation of \cite{Rosenfeld2025Sep}, and it drastically improves the running time. 

Since the order of the elements in the cover does not matter, given a partial solution $\{v_1,\ldots,v_i\}$, we can always require $v_{i+1}$ to cover a given integer $u$. In particular, we can take $u$ to be the integer that can be covered by the fewest integers (thus minimizing the number of possible choices for $v_{i+1}$ before considering pruning).

Finally, whenever we finish the backtracking over a partial cover $\{v_1,\ldots,v_i, v_{i+1}\}$, we remember for all future completions of $\{v_1,\ldots,v_i\}$, that there is no reason to try to add $v_{i+1}$ (since we already know that no valid cover contains $\{v_1,\ldots,v_i, v_{i+1}\}$), so we eliminate it from the remaining available integers for the completions of the partial cover $\{v_1,\ldots,v_i\}$.

The implementation is available in the git repository linked with this article \cite{gitRepo}. This implementation is experimentally much faster than the one used in \cite{Rosenfeld2025Sep} even for $7$ runners. The run time for $k=7$, $d=163$ and $c=1$ which was the worst case and took $32$ hours is reduced to $50$ minutes on the same machine. Moreover, Lemma \ref{lem:needOfCovering} means, we would only need to go to smaller prime than $163$ for the case of $7$ runners, making this proof even faster as well. 

We provide in figure \ref{fig:runtime} the running time of our implementation for various values of $d\cdot c$. The running time seems to grow roughly as $O((d\cdot c)^{5.7})$ (the curve $x^{5.7}/(5\times10^7)$ is also plotted for comparison). It might not be significant, but we note that the few points that are far below the line are all cases where $d\cdot c$ has at least two divisors ($51,81,125,128$). We also provide in the appendix the Table \ref{tab:runtimes} that shows the running times for each $d$ used in our proof.
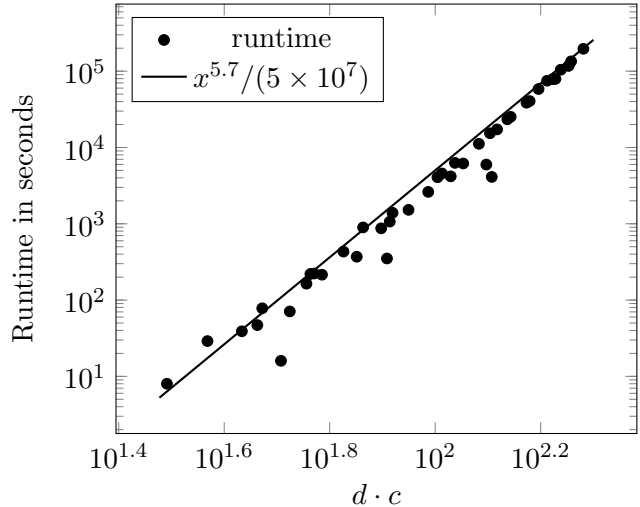
\begin{figure}[t!]
  \centering
\begin{tikzpicture}
\begin{axis}[
    xmode=log,
    ymode=log,
    xlabel={$d\cdot c$},
    ylabel={Runtime in seconds},
    legend pos=north west
]

\addplot[
    only marks,
    mark=*,
]
table {runtime.data};
\addlegendentry{runtime}

\addplot[
    thick,
    domain=30:200,
    samples=200
]
{ x^(5.7)/(5*10^7) };
\addlegendentry{$x^{5.7}/(5\times10^7)$}

\end{axis}
\end{tikzpicture}

\caption{Runtime of our implementation for various values of $d\cdot c$, and the curve $x^{5.7}/(5\times10^7)$.}
\label{fig:runtime}
\end{figure}

\section{Conclusion} 
As suggested in \cite{Rosenfeld2025Sep}, it required small optimizations to push the technique used for $8$ runners to solve the $9$ runners case. As our technique can be applied for every number of runners from $3$ to $9$, it seems reasonable to expect that it could theoretically be used for any particular number of runners. In practice, the computational cost grows fast with the number of runners, and with the size of the primes that need to be considered.

\paragraph{SAT-solver--based ones implementations}

In \cite{Rosenfeld2025Sep}, we raised the idea of using state-of-the-art
solvers instead of our ad-hoc implementation to improve the running time. Based on the results of the 2025 PACE contest, the best approaches to solve set cover instances all seem to rely directly on SAT-solvers~\cite{BibEntry2025Dec}. Since our computation is really close to a set cover problem, it seems natural to try to encode our problem as a SAT instance and use a state-of-the-art SAT solver to verify the condition of Lemma \ref{lem:reformulated}. The transformation of set-cover to SAT is pretty straightforward, and the other constraints we have concern the cardinality of some sets, so we use a SAT solver that allows cardinality constraints.

We provide with our implementation a program that generates such an encoding for any choice of $k$ and $d$\footnote{Note that our implementation does not allow to choose the value of $c$ and is equivalent to using $c=1$.}. This program generates an encoding to verify the condition of Lemma \ref{lem:reformulated} with Cardinality-CDCL \cite{Cardinality-CDCL} or with Minicard \cite{minicard} (which are respectively cardinality-constraints extensions of the SAT solvers CaDiCaL \cite{cadical} and MiniSat \cite{minisat}). Running any of these two solvers is experimentally much slower than our ad-hoc implementation. For instance, with $k=5$ and $d=31$, our implementation takes roughly $0.02$ seconds, while Cardinality-CDCL and Minicard take respectively $220$ seconds and $59$ seconds to conclude that the formula is unsatisfiable.

David Silver \cite{Silver2025LonelyRunnerSAT} also implemented the SAT-solver approach to verify the condition of Lemma \ref{lem:reformulated} using Kissat \cite{BiereFallerFazekasFleuryFroleyksPollitt-SAT-Competition-2024-solvers}. He reaches the same conclusion that the SAT-solver approach is not competitive with our ad-hoc implementation. We do not claim to be able to provide a theoretical explanation of this phenomenon, but our problem is quite different from "typical" SAT instances for which these solvers are optimized (our instances are monotone SAT, with very large clauses that behave in a "pseudo-random" manner, together with extra cardinality constraints).

These SAT-solver--based implementations can still be used to verify some of the smallest values of $d$ and $k$.

\paragraph{Towards $10$ runners.}
The bound given by Corollary \ref{cor:UpperBoundOnProduct} for 10 runners is roughly $2.77408\times 10^{110}$, which means that we would need to consider many more primes (or prime powers) than for $9$ runners, and larger primes as well. One would need to be able to verify the conditions of (a variant of) Lemma \ref{lem:needOfCovering} for primes up to roughly $263$ (depending slightly on the prime powers used for small primes). It is relatively simple to adapt the lemma and our code for the $10$ runner case. Running our implementation for the primes between $37$ and $71$ we can try to interpolate the running time for larger primes (see Fig. \ref{fig:runtime10}). Based on this rough estimation, it seems that verifying the condition for $p=263$ would take roughly two years on a single processor core with our current implementation (and we also need to run this for all smaller primes). It is certainly doable (and is easily parallelizable), but probably not worth the effort.
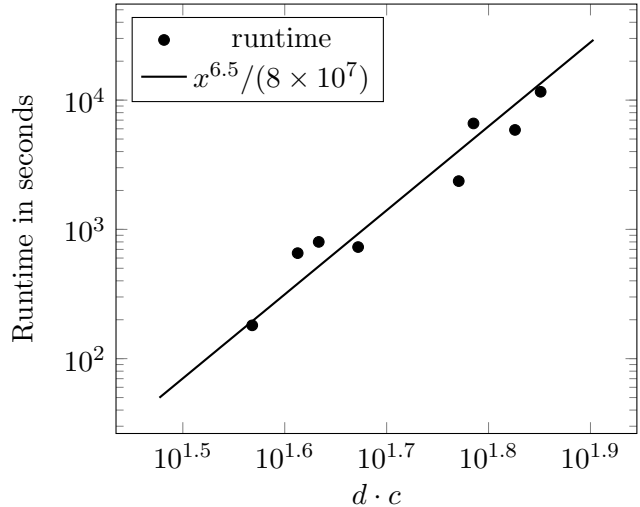
\begin{figure}[t!]
  \centering
\begin{tikzpicture}
\begin{axis}[
    xmode=log,
    ymode=log,
    xlabel={$d\cdot c$},
    ylabel={Runtime in seconds},
    legend pos=north west
]

\addplot[
    only marks,
    mark=*,
]
table {runtime_10.data};
\addlegendentry{runtime}

\addplot[
    thick,
    domain=30:80,
    samples=200
]
{ x^(6.5)/(8*10^7)};
\addlegendentry{$x^{6.5}/(8\times10^7)$}

\end{axis}
\end{tikzpicture}

\caption{Runtime of our implementation for $k=9$ and $c=1$ for various values of $d$ along with the curve $x^{6.5}/(8\times10^7)$. Based on the similar plot for $k=8$ (Figure \ref{fig:runtime}), it seems to be a reasonable estimation of the running time for larger values of $d$.}
\label{fig:runtime10}
\end{figure}

Some extra ideas are needed to make this computation more reasonable. 
This is what was done by Trakulthongchai in \cite{Tanupat2025Nov} to solve the case of $10$ runners.
One of the improvement here and in \cite{Tanupat2025Nov} is to generalize Lemma \ref{lem:oldCovering} (from \cite{Rosenfeld2025Sep}) by replacing $k+1$ by any multiple $c(k+1)$ of $k+1$  (see Lemma \ref{lem:needOfCovering}). The other main improvements that we use here were the following: 
\begin{itemize}
\item considering powers of small primes instead of only primes, which allow us to consider fewer and smaller primes in general,
\item finding some extra divisibility conditions based on Lemma \ref{EvenStrongerGCDCondition} that help to reduce the number of cases to verify,
\item improving the implementation used in \cite{Rosenfeld2025Sep} as discussed in Section \ref{sec:implementation}.
\end{itemize} 
In comparison, \cite{Tanupat2025Nov} uses a clever sieving approach to make the verification much faster (based on what is reported in \cite{Tanupat2025Nov} they do seem to be faster by a factor of order $100$). Some of the improvements that we used could be combined with the sieving approach of \cite{Tanupat2025Nov} to further improve the verification time (in particular, being able to consider power of small primes).

\bibliography{biblio}{}
\bibliographystyle{abbrv}
\appendix
\clearpage
\section{Table of the running times}
\begin{table}[h]
\centering
\footnotesize
\begin{tabular}{|c|c|c|c|}
\hline
$p$ & $c$ & Running time (seconds) \\\hline\hline
17        &3         &16       \\\hline
19        &3         &164      \\\hline
23        &2         &47       \\\hline
25        &5         &5974     \\\hline
29        &2         &220      \\\hline
31        &1         &8        \\\hline
37        &1         &29       \\\hline
41        &2         &1067     \\\hline
43        &1         &39       \\\hline
47        &1         &78       \\\hline
53        &1         &71       \\\hline
59        &1         &223      \\\hline
61        &1         &215      \\\hline
64        &2         &4115     \\\hline
67        &1         &432      \\\hline
71        &1         &370      \\\hline
73        &1         &896      \\\hline
79        &1         &872      \\\hline
81        &1         &351      \\\hline
83        &1         &1394     \\\hline
89        &1         &1523     \\\hline
97        &1         &2617     \\\hline
101       &1         &4071     \\\hline
103       &1         &4566     \\\hline
107       &1         &4185     \\\hline
109       &1         &6284     \\\hline
113       &1         &6174     \\\hline
121       &1         &11170    \\\hline
127       &1         &15367    \\\hline
131       &1         &17263    \\\hline
137       &1         &23409    \\\hline
139       &1         &25232    \\\hline
149       &1         &38621    \\\hline
151       &1         &40566    \\\hline
157       &1         &58429    \\\hline
163       &1         &75059    \\\hline
167       &1         &78611    \\\hline
169       &1         &79241    \\\hline
173       &1         &104496   \\\hline
179       &1         &117250   \\\hline
181       &1         &134716   \\\hline
191       &1         &196571   \\\hline
\end{tabular}
\caption{Table of the running times for each pair $(p,c)$ used to prove the lonely runner conjecture for $9$ runners with the implementation available at the git repository \cite{gitRepo}.}
\label{tab:runtimes}
\end{table}
\end{document}